\documentclass[12pt]{article}
\usepackage[cp1251]{inputenc}
\usepackage{amssymb,amsmath}
\usepackage{graphicx,color}
\usepackage{mathrsfs}
\usepackage{array, amsfonts, mathrsfs}
\usepackage{amssymb}
\usepackage{amsthm}
\usepackage{graphics, graphicx}
\usepackage{epstopdf}

\newtheorem{Lemma}{Lemma}

\newtheorem{Convention}{Convention}
\newtheorem{Remark}{Remark}
\newtheorem{Corollary}{Corollary}

\newtheorem{Definition}{Definition}

\date{}

\begin{document}

\title{Controllability, returning waves and scattering without reverberation in 3D acoustic dynamic system}
\author{Mikhail I. Belishev
\thanks{St. Petersburg Department of Steklov Mathematical Institute
of Russian Academy of Sciences, Fontanka 27, St. Petersburg,
Russia, 191023; belishev@pdmi.ras.ru}, Aleksei F.
Vakulenko\thanks{St. Petersburg Department of Steklov Mathematical
Institute of Russian Academy of Sciences, Fontanka 27, St.
Petersburg, Russia, 191023; vak@pdmi.ras.ru}}
\maketitle

\begin{abstract}
The dynamic acoustic scattering system is governed by the wave
equation $u_{tt}-\Delta u+qu=0$ in $\Bbb
R^3,\,\,\,-\infty<t<\infty$, with a compactly supported potential
$q$ and infinitely distant sources (controls) $f$, which initiate
incoming spherical waves $u=u^f(x,t)$ provided
$u^f\big|_{|x|<-t,\,\,\,t<0}=0$. These waves are focused at $x=0$
and fill up the whole space at the moment $t=0$. The system is
{\it controllable} if the set of waves $u^f(\cdot,0)$ produced by
all finite energy controls $f$, covers the space $L_2(\Bbb R^3)$.

As we show, if the Hamiltonian $H=-\Delta+q$ has the bound states,
then in the space the points $a$ appear such that the system,
being refocused at $x=a$, loses controllability. The latter leads
to a physical effect: the waves $u^f$ of finite energy appear,
which vanish simultaneously in the past and future cones
$|x|<\pm\, t$ and leave the region of inhomogeneity of $q$ without
reverberation. This effect has some similarities with the
wavefront reversal (Time Reversing Mirror), but is more meaningful
from a mathematical point of view.
\end{abstract}

\subsection*{Introduction}

\noindent$\bullet$\,\,\, The purpose of this paper is to
familiarize readers (especially acoustics specialists) with a
physical effect found out by the authors in \cite{BV_3_JMA_2008,BV
s-points JMA 2010}. The essence of this effect lies in the
existence of waves that, when scattered, do not cause
reverberation in a zone of inhomogeneity in a medium. The
appearance of such waves is connected with possible lack of
controllability of the scattering system. An example is given
illustrating such a lack.

In Appendix we provide rigorous, complete proofs of the key
representations and facts (Lemmas \ref{L Upsilon not empty},
\ref{L G=Phi by x}). By doing this, we correct some inaccuracies
and gaps in \cite{BV s-points JMA 2010}, which, however, did not
affect the result.
\smallskip

\noindent$\bullet$\,\,\, The dynamic acoustic scattering system
under consideration is governed by the wave equation
$u_{tt}-\Delta u+qu=0$ in $\Bbb R^3,\,\,\,-\infty<t<\infty$, with
a bounded compactly supported potential $q$ and infinitely distant
sources (controls) $f$, which initiate incoming spherical waves
$u=u^f(x,t)$, vanishing in the past cone:
$u^f\big|_{|x|<-t,\,\,\,t<0}=0$. These waves are focused at $x=0$
and fill up the whole space at the moment $t=0$.

The system is said to be {\it controllable} if the set $\mathscr
U$ of waves $u^f(\cdot,0)$ produced by all finite energy controls
$f$, covers the space $\mathscr H:=L_2(\Bbb R^3)$. As was found
out in \cite{BV_2_contr_R3_2006,BV s-points JMA 2010}, if the
Hamiltonian $H=-\Delta+q$ has the negative discrete spectrum
(bound states) then the points $a$ appear in the space such that
the system, being refocused at $x=a$ \footnote{Refocusing is
equivalent to a shift $q(x)\mapsto q(x-a)$ of the potential in the
original system.}, loses controllability .

A lack of controllability leads to a physical effect: unreachable
states $h\in\mathscr H\ominus\mathscr U$ appear in the system and,
along with them, finite-energy controls $f$, which initiate waves
with the property $u^f(\cdot,0)=0$ and
$u^f(\cdot,-t)=-u^f(\cdot,t)$. The behavior of such waves is
unusual: they come from infinity, disappear at $t=0$, and return
back to infinity along the same trajectory, which is why we call
them the returning waves ({\it r-waves}). They vanish
simultaneously in the past and future cones:
$u^f\big|_{|x|<|t|}=0$ holds, and thus leave the region of
inhomogeneity of $q$ without reverberation.

The connections between controllability, existence of r-waves and
reverberation form the main content of the given paper.
\smallskip

\noindent$\bullet$\,\,\, There is a certain resemblance between
the effect we are considering and the time-reversing mirror
\cite{Fink 1,Fink 2,TRM_1991}: in both cases, the key role is
played by solutions of the wave equation which are even/odd with
respect to time. However, the TRM only exploits the invariance of
the wave equation under change $t\mapsto -t$, while the appearance
of r-waves and absence of reverberation are based on the
fundamental property of dynamic acoustic system, which is a
possible lack of its controllability. This effect has a rich and
beautiful mathematical background and is associated with the
degeneration of jets of polyharmonic functions and the
factorization of the scattering matrix. \cite{BV s-points JMA
2010}.

\subsection*{Dynamic acoustic scattering system}

\noindent$\bullet$\,\,\, The three-dimensional time domain
acoustic scattering problem is to find $u=u^f(x,t)$ satisfying
\begin{align}
\label{Eq 1}& u_{tt}-\Delta u+qu=0, \qquad (x,t) \in {\mathbb R}^3 \times (-\infty,\infty); \\
\label{Eq 2}& u \mid_{|x|<-t} =0 , \qquad t<0\,;\\
\label{Eq 3}& \lim_{s \to \infty}
s\,u\left((s+\tau)\,\omega,-s\right)=f(\tau,\omega), \qquad
(\tau,\omega) \in \mathcal T:=[0,\infty)\times \Bbb S^2.
\end{align}
The function (potential) $q$ is assumed real, bounded:
$|q|\leqslant c$, and finite (compactly supported):
$q(x)\big|_{|x|>R_*}=0$. According to control and system theory,
function $f$ is an input ({\it control}); the solution
$u^f(\cdot,t)$ is a state ({\it wave}) at the moment $t$. For what
follows we accept
\begin{Convention}\label{A 1}
The equation $(-\Delta+q)\phi=0$ in $\Bbb R^3$ has no nonzero
solutions, which satisfy $\phi\to 0$ as $|x|\to\infty$.
\end{Convention}
\noindent It ensures that $\lambda=0$ is neither bound nor
semi-bound state of the Hamiltonian $H=-\Delta+q$ which determines
the evolution of the system (\ref{Eq 1})--(\ref{Eq 3}). This
requirement can be satisfied by an arbitrarily small perturbation
of the potential \cite{Reed_Simon}. The consequences of its
violation are discussed below in the Comments.
\medskip

\noindent$\bullet$\,\,\, The following is the standard attributes
of the system and some well-known facts about them.
\smallskip

\noindent {\bf 1.\,\,\,}The {\it outer space} of controls is the
Hilbert space $\mathscr F:=L_2(\mathcal T)$ with the inner product
$$
\langle f,g\rangle_\mathscr F:=\int_\mathcal T fg\,d\mathcal
T=\int_{[0,\infty)}d\tau \int_{\Bbb
S^2}f(\tau,\omega)g(\tau,\omega)\,d\omega;
$$
the value $\|f\|^2_{\mathscr F}=\int_\mathcal T|f|^2\,d\mathcal T$
is referred to as an {\it energy} of $f$. The outer space contains
the family of subspaces
$$
\mathscr F^\xi:= \{f\in\mathscr F\,|\,\,f\big|_{0\leqslant
\tau<\xi}=0\},\quad\xi>0,
$$
formed by the delayed controls ($\xi$ is a delay).
\smallskip

\noindent {\bf 2.\,\,\,}The {\it inner space} of states is the
Hilbert space $\mathscr H:=L_2(\Bbb R^3)$ with the inner product
$$
\langle v,w\rangle_\mathscr H=\int_{\Bbb R^3} v(x) w(x)\,dx;
$$
the value $\|v\|^2_{\mathscr F}=\int_{\Bbb R^3} |v(x)|^2\,dx$ is
an {\it energy} of $v$. We also consider the family of subspaces
$\mathscr H^\xi:=\{v\in\mathscr H\,|\,\,v\big|_{|x|<\xi}=0\}$,
$\xi>0$.

The waves are the states: for any control $f$ and time $t$ we have
$u^f(\cdot,t)\in\mathscr H$. By (\ref{Eq 2}), at the moment $t=0$
the waves fill up the whole $\Bbb R^3$ and form the wave subspace
$$
\mathscr U:=\{u^f(\cdot,0)\,|\,\,f\in\mathscr
F\}\,\subset\,\mathscr H.
$$
In control theory, it is also called a {\it reachable set}. The
control delay results in a delay in the waves, with the waves
propagating at a speed equal to $1$. As a consequence, for
$f\in\mathscr F^\xi$ one has $u^f(x,0)\big|_{|x|<\xi}=0$, i.e.,
$u^f(\cdot,0)\in \mathscr H^\xi$. The subspaces
$$
\mathscr U^\xi:=\{u^f(\cdot,0)\,|\,\,f\in\mathscr
F^\xi\}\,\subset\,\mathscr H^\xi,\quad\xi>0
$$
are formed by the delayed waves.
\smallskip

\noindent {\bf 3.\,\,\,}The evolution of the system is governed by
the {\it Hamiltonian} $H=-\Delta+q$. Considered as an operator in
$\mathscr H$ defined on the Sobolew space $W^2_2(\Bbb
R^3)=\{w\in\mathscr H\,|\,\,|\nabla w|\in\mathscr H\}$, it is an
unbounded self-adjoint operator.Since the potential is finite and
bounded, the Hamiltonian may have at most a finite number of the
negative eigenvalues:
$$
H\phi_i=-\varkappa^2_i\phi_i,\,\, i=0,\dots,N;\quad
-\varkappa^2_0<-\varkappa^2_1\leqslant-\varkappa^2_2\leqslant\dots\leqslant-\varkappa^2_N\,<\,0,\,\,\,\varkappa_i>0,
$$
whereas the eigenvalue (ground level) $-\varkappa_0^2$ is simple
and the corresponding eigenfunction (ground state) $\phi_0$ can be
chosen so as to satisfy $\phi_0>0$ everywhere in $\Bbb R^3$. The
ground state exponentially decreases at infinity:
$|\phi_0(x)|\leqslant c_q\frac{e^{-\varkappa_0 |x|}}{|x|}$ holds
\cite{Reed_Simon}. If the potential is nonnegative (rejective),
i.e., $q\geqslant 0$ holds in $\Bbb R^3$, then the Hamiltonian is
nonnegative: $\langle Hv,v\rangle_\mathscr H\geqslant 0$ is valid
for all $v$, on which $H$ is defined. In such a case, the negative
eigenvalues are absent.
\smallskip

\noindent {\bf 4.\,\,\,}The {\it control operator}
(input\,$\mapsto$\,state map) $W:\mathscr F\to\mathscr H$,
$Wf:=u^f(\cdot,0)$ is bounded, and $W\mathscr F=\mathscr U$ and
$W\mathscr F^\xi=\mathscr U^\xi,\,\,\xi>0$ holds. The control
operator $W_0$ of the (unperturbed) system (\ref{Eq 1})--(\ref{Eq
3}) with $q=0$ is unitary and the representation
\begin{equation}\label{Eq W=W0+I}
W=W_0+J
\end{equation}
is valid with an integral compact operator $J:\mathscr
F\to\mathscr H$ (see, e.g., \cite{BV s-points JMA 2010}).
\smallskip

\noindent {\bf 5.\,\,\,}The input\,$\mapsto$\,output
correspondence of the system is realized by the {\it response
operator} $R:\mathscr F\to\mathscr F$,
\begin{align*}
& (Rf)(\tau ,\omega )\,:= \lim_{s \to +\infty} s\, u^f((s+\tau
)\,\omega ,s), \quad (\tau ,\omega ) \in \mathcal T.
\end{align*}
The representation
\begin{equation*}
(Rf)(\tau ,\omega )\,=\int_{\mathcal
T}r(\tau+\sigma,\omega,\omega')f(\sigma,\omega')\,d\sigma\,d\omega',
\quad (\tau ,\omega ) \in \mathcal T
\end{equation*}
is valid with the kernel obeying $r\big|_{\tau>2R_*}=0$ as a
consequence of the finiteness of the potential
$q\big|_{|x|>R_*}=0$\,\,\cite{BV s-points JMA 2010}.
\smallskip

\noindent {\bf 6.\,\,\,}The {\it connecting operator} $C:=W^*W$
acts in the outer space $\mathscr F$ and connects the Hilbert
metrics of the outer and inner spaces:
$$
\langle Cf,g\rangle_\mathscr F=\langle Wf,Wg\rangle_\mathscr
H=\langle u^f(\cdot,0),u^g(\cdot,0)\rangle_\mathscr H, \qquad
f,g\in\mathscr F.
$$
It is self-adjoint and nonnegative: $C^*=C$ and $\langle
Cf,f\rangle_\mathscr F\geqslant 0$ holds for all $f\in\mathscr F$.
The representation
\begin{equation}\label{Eq C=I+R}
C\,=\,I+R
\end{equation}
is valid, where $I$ is the unit operator \cite{BV s-points JMA
2010,BV_3_JMA_2008}. The operators $C$ and $W$ have the same
null-subspace: $Cf=0$ holds if and only if $Wf=0$. As we'll see,
this null space is usually trivial: it consists of $f=0$. However,
for a wide class of potentials it turns out to be non-trivial,
which leads to the effects that are the main topic of our paper.

\subsection*{Controllability}

\noindent$\bullet$\,\,\, We say that the system (\ref{Eq
1})--(\ref{Eq 3}) is {\it controllable} if the equality $$\mathscr
U=\mathscr H$$ holds. Break of the equality is referred to as a
{\it lack of controllability}, whereas the subspace $\mathscr
D:=\mathscr H\ominus\mathscr U$, i.e., the orthogonal complement
to the wave set, is called the {\it defect} or {\it unreachable}
subspace. Controllability means that for any state $h\in\mathscr
H$ there is a control $f\in\mathscr F$, which provides
$u^f(\cdot,0)=h$. For example, this is true for the unperturbed
system with $q=0$. A lack of controllability means the existence
of the {\it unreachable states}. What is known about them?
\medskip

\noindent$\bullet$\,\,\,A function $\varphi$ is called
q-polyharmonic if it satisfies the equation
$(-\Delta+q)^m\varphi=0$ for an integer positive $m$, while
$(-\Delta+q)^l\varphi$ does not vanish identically for $l<m$.
Number $m$ is called an order of $\varphi$.

We say that a state $h\in\mathscr H$ is q-polyharmonic if
\begin{equation}\label{Eq def polyharm}
(-\Delta+q)^mh\,=\,0\qquad \text{in}\,\,\,\Bbb R^3\setminus\{0\}
\end{equation}
holds. It may be singular at $x=0$, but has finite energy:
$\int_{\Bbb R^3}|h(x)|^2\,dx<\infty$. Any such state is
unreachable. Let us sketch out a proof of this fact, postponing
the details to the Appendix, Lemma \ref{L Upsilon not empty}.

Take a delayed control $f\in\mathscr F^\xi$, so that the
corresponding wave $u^f(\cdot,0)$ is late and vanishes in the ball
$B_\xi(0):=\{x\in\Bbb R^3\,|\,\,|x|<\xi\}$. Since the Hamiltonian
does not depend on time, the system (\ref{Eq 1})--(\ref{Eq 3}) is
stationary and the input\,$\mapsto$\,state operator commutes with
the differentiation w.r.t. time: $u^{\partial_tf}=\partial_tu^f$
holds. This leads to
$u^{\partial_t^{2m}f}=\partial_t^{2m}u^f\overset{\text{see}\,\,(\ref{Eq
1})}=(-H)^{m}u^f$, whereas the symmetry of $H$ and integration by
parts imply
\begin{equation}\label{Eq derivation}
\langle h,u^{\partial_t^{2m}f}(\cdot,0)\rangle_\mathscr H=\langle
h,(-H)^mu^{f}(\cdot,0)\rangle_\mathscr H=(-1)^m\langle
H^{m}h,u^f(\cdot,0)\rangle_\mathscr H\overset{{\rm
see\,\,}(\ref{Eq def polyharm})}=0.
\end{equation}
One can show that the waves $u^{\partial_t^{2m}f}(\cdot,0)$
constitute rich enough (dense) set of states in $\mathscr U$.
Therefore, $\langle
h,u^{\partial_t^{2m}f}(\cdot,0)\rangle_\mathscr H=0$ yields
$h\bot\mathscr U$, so that $h\in\mathscr D$ holds. The rigorous
detailed proof see in Appendix, Lemma \ref{L Upsilon not empty}.
\smallskip

\noindent$\bullet$\,\,\, In addition, we present a property of
unreachable states, which follows easily from the results of
\cite{BV_11_contr_R3_2024}. If $h$ is unreachable then, for any
$\xi>0$, its truncation $h\big|_{|x|\geqslant\xi}$ onto the
exterior of the ball $B_\xi(0)$ is either a q-polyharmonic
function or can be approximated by such functions with arbitrary
precision in $\mathscr H$-metric.

Another property is the representation in the neighborhood of the
origin of coordinates, established in \cite{BV_3_JMA_2008}, Lemma
3.2:
$$
h(x)\,=\,\frac{d(\omega)}{r}+\tilde h(x),\qquad
r=|x|,\,\,\,\omega=\frac{x}{|x|},
$$
where the function (diagram) $d$ is square summable on the sphere
$\Bbb S^2$, and $\tilde h$ has the gradient, which is square
summable near $x=0$.
\medskip

\noindent$\bullet$\,\,\, One of the `providers' of unreachable
states is the Green function $G(x,y)$ of the Hamiltonian $H$
defined by the relations
\begin{equation}\label{Eq def G}
(-\Delta+q)G(\cdot,y)\,=\,\delta_y(\cdot);\quad
G(x,y)\,\underset{x\sim y}\sim\,\frac{1}{4\pi|x-y|};\quad
G(x,y)\underset{|x|\to\infty}=0,
\end{equation}
where $\delta_y$ is the Dirac delta-function supported at the
point $y$. Let us show how, with an appropriate choice of point
$y$, the function $G(\cdot,y)$ may turn out to be an unreachable
state.

For a fixed $y$ it admits the representation
\begin{equation}\label{Eq G=Phi by x+omega}
G(x,y)=\frac{\Phi(y)}{4\pi|x|} +\omega(x,y),
\end{equation}
where $\Phi$ is a smooth function obeying
\begin{equation}\label{Eq for Phi}
\Phi(x)+\int_K\frac{q(s)}{4\pi|x-s|}\Phi(s)\,ds\,=\,1, \qquad
x\in\Bbb R^3,
\end{equation}
and
$\omega(x,y)\underset{|x|\to\infty}=O\left(\frac{1}{|x|^2}\right)$.
Applying $-\Delta$ to (\ref{Eq for Phi}), we get
\begin{equation}\label{Eq properties Phi}
(-\Delta+q)\Phi=0 \quad {\rm in}\,\,\Bbb R^3;\qquad
\Phi(x)\underset{|x|\to\infty}\to1.
\end{equation}
Representation (\ref{Eq G=Phi by x+omega}) with function $\Phi$
satisfying (\ref{Eq properties Phi}), is established in \cite{BV
s-points JMA 2010} with some gap in the proof; we provide its
rigorous proof below in Appendix, Lemma \ref{L G=Phi by x}.
Representations of this type are known (see, for example,
\cite{R.G.Novikov}, (1.9)), but we were unable to find the
original source.

Assume that $\Phi(0)=0$ holds. Then the function
$G(\cdot,0)\overset{(\ref{Eq G=Phi by
x+omega})}=\omega(\cdot,0)=O\left(\frac{1}{|x|^2}\right)$ turns
out to be square integrable in $\Bbb R^3$ (see also (\ref{Eq def
G})), i,e., is a state. Meanwhile, by (\ref{Eq def polyharm}) and
 (\ref{Eq def G}), $G(\cdot,0)$ is a q-polyharmonic (of the order $m=1$) function
and, as such, is an unreachable state.
\medskip

\noindent$\bullet$\,\,\, If the Hamiltonian $H$ has bound states
$\phi_0,\phi_1,\phi_2,\dots$ corresponding to the negative
eigenvalues
$-\varkappa_0^2<-\varkappa_1^2\leqslant-\varkappa_2^2\leqslant\dots$,
then the function $\Phi$ necessarily has the zeros. Indeed, with
regard to the properties (\ref{Eq properties Phi}) and the
asymptotic $\phi_0\sim \frac{e^{-\varkappa^2_0|x|}}{|x|}$ we have
\begin{align*}
&\int_{\Bbb
R^3}\phi_0(x)\Phi(x)\,dx=-\frac{1}{\varkappa^2_0}\,\int_{\Bbb
R^3}\left[(-\Delta+q(x))\phi_0(x)\right]\Phi(x)\,dx=\\
&=-\frac{1}{\varkappa^2_0}\,\int_{\Bbb
R^3}\phi_0(x)(-\Delta+q(x))\Phi(x)\,dx=0.
\end{align*}
Since the ground state $\phi_0$ does not change the sign, function
$\Phi$ must do so to annulate the integral. As is seen from the
integral equation (\ref{Eq for Phi}), $\Phi$ is a continuous
function. Hence, changing the sign, it has to have zeros.
\medskip

\noindent$\bullet$\,\,\, Let $H$ have the bound states and
$\Phi(a)=0$ holds. In such a case, by (\ref{Eq G=Phi by x+omega}),
the function $G_a(x,a):=G(x-a,0),\,\,x\in\Bbb R^3$ turns out to be
an unreachable state of the system (\ref{Eq 1})--(\ref{Eq 3}) with
the shifted potential $q_a(\cdot):=q(\cdot-a)$.

Let an observer operating with system (\ref{Eq 1})--(\ref{Eq 3})
have the option to change the focus of waves incoming from
infinity. Shifting the focus from $0$ to $a$, the observer gets an
uncontrollable system. Summarizing, we arrive at the following
notion.
\begin{Definition}\label{D 1}
We say a point $x=a$ to be an s-point (and write $a\in\Upsilon_q$)
if the system (\ref{Eq 1})--(\ref{Eq 3}) with the (shifted)
potential $q_a$ is not controllable.
\end{Definition}
In other words, $a\in\Upsilon_q$ holds, if the refocusing of
incoming spherical waves from $x=0$ on $x=a$ leads to a lack of
controllability of the system. By (\ref{Eq def polyharm}), this
happens if there exist the states obeying
\begin{equation*}
(-\Delta+q_a)^mh\,=\,0\quad {\rm in}\,\,\Bbb
R^3\setminus\{0\},\quad\int_{\Bbb R^3}|h(x)|^2\,dx\,<\,\infty,
\end{equation*}
where $m$, is called an {\it order} of the s-point $a$. By
$\Upsilon_q^m$ we denote the set of s-points of the order $m$.
Typically, these sets consist of surfaces.
\begin{Remark}\label{R 1}
As is shown in \cite{BV s-points JMA 2010}, the set $\Upsilon_q^1$
is nonempty if and only if the Hamiltinian $H$ has a negative
ground level. If $G(\cdot,a)$ is unreachable state then
$a\in\Upsilon_q^1$ holds. Conversely, to each first-order s-point
$x=a$ there corresponds a {\it unique} (up to a numerical factor)
unreachable state of the form $G(\cdot,a)$.
\end{Remark}

\subsection*{Returning waves}

\noindent$\bullet$\,\,\,  Lack of controllability leads to
physical consequences, and that is what this paper is about. The
system allows for the existence of the so-called returning waves
and scattering without reverberation.

Let the coordinates be chosen in such a way that the origin $x=0$
is an s-point. Then we have
$$
\mathscr D=\mathscr H\ominus\mathscr U=\mathscr H\ominus W\mathscr
F={\rm coker\,}W\not=\{0\}.
$$
The control operator is of the form (\ref{Eq W=W0+I}) with the
unitary $W_0$ and compact $J$. By the Fredholm theory, the compact
perturbation does not violate the index of the operator, whereas
the index of a unitary operator is zero. Thus, we have
$$
0={\rm ind\,}W_0={\rm ind\,}W:={\rm dim\, ker\,}W-{\rm dim\,
coker\,}W={\rm dim\,}\mathscr N-{\rm dim\,}\mathscr D,
$$
where $\mathscr N:={\rm ker\,}W=\{f\in\mathscr F\,|\,\,Wf=0\}$ is
the null-subspace of the control operator, and therefore
\begin{equation}\label{Eq dim D}
0<{\rm dim\,}\mathscr N={\rm dim\,}\mathscr D <\infty
\end{equation}
holds. So, uncontrollable system necessarily possesses the {\it
null-controls} $f\in\mathscr N$, which provide $u^f(\cdot,0)=0$.

Put $f\in \mathscr N$ in (\ref{Eq 1})--(\ref{Eq 3}) and consider
the function
$$
\tilde u^f(x,t):=\begin{cases}\,\,\,\,u^f(x,t),&t\leqslant 0;\\
                             -u^f(x,-t),&t>0;
                 \end{cases}, \qquad x\in\Bbb R^3.
$$
It coincides with $u^f$ for $t<0$ and vanishes at $t=0$. By the
latter, $\tilde u^f$ and $\tilde u^f_t$ are continuous w.r.t. time
near $t=0$, and one can easily verify that $\tilde u^f$ satisfies
the wave equation (\ref{Eq 1}) for all $-\infty<t<\infty$.
Meanwhile, problem (\ref{Eq 1})--(\ref{Eq 3}) has a unique
solution, so that $u^f=\tilde u^f$ holds. Thus, null-control
generates a finite energy wave that is odd with respect to time.
As a result, by (\ref{Eq 2}) we have $u^f\big|_{|x|<|t|}=0$, i.e.,
it vanishes in the past and future cones simultaneously.

From the physical viewpoint, the behavior of such a wave is
amazing. It comes in from infinity with the forward front
$|x|=-t$, vanishes at $t=0$ and then returns back to infinity
along the same trajectory. This motivates to call it a {\it
returning wave} (abbreviated r-wave). At $t>0$ it has the backward
front $|x|=t$ and leaves no trace (reverberation) in the zone of
heterogeneity, where the potential is located.

\begin{figure}[h]
    \includegraphics{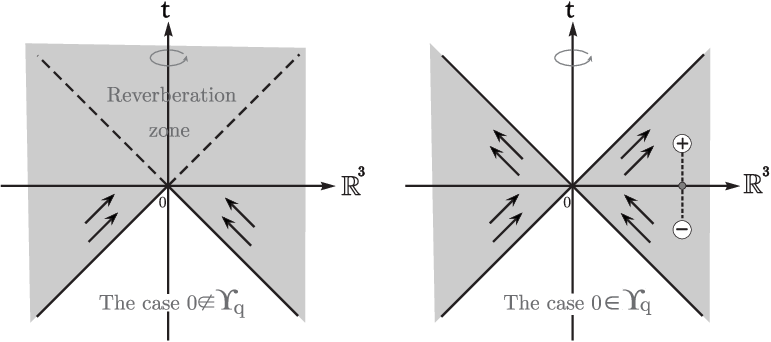}
    \caption{s-point and r-waves}
\end{figure}

In a sense, $x=0$ is a point that stops the r-waves (is a {\it
stop-point}). The appearance of s-points and r-waves accompanies
the lack of controllability of the scattering system.

Note that returning waves of infinite energy are possible even in
the unperturbed system with $q=0$ \cite{BV_2_contr_R3_2006}. They
are not physical, since they require controls with $\int_{\mathcal
T}|f|^2,d\mathcal T=\infty$. The novelty of the results obtained
in \cite{BV_3_JMA_2008,BV s-points JMA 2010} consisted in the
existence of {\it finite-energy} r-waves initiated by {\it
finite-energy} controls.
\smallskip

\noindent$\bullet$\,\,\,If the Hamiltonian $H$ has {\it a unique}
negative level, then all s-points in the system are first-order
s-points. In this case, by Remark \ref{R 1} and according to
(\ref{Eq dim D}), for each of them, we have
$$
{\rm dim\,}\mathscr N={\rm dim\,}\mathscr D=1,
$$
i.e., the subspace of null controls is one-dimensional.
Accordingly, there is {\it a unique} (up to a numerical factor)
r-wave that scatters without reverberation.
\medskip

\noindent$\bullet$\,\,\, Let $f\in\mathscr N$ be a null-control,
so that $Wf=0$ holds. Applying $W^*$, we have
$W^*Wf=Cf\overset{(\ref{Eq C=I+R})}=f+Rf=0$. Thus, $\mathscr N$ is
an eigensubspace of the response operator, corresponding to its
eigenvalue $-1$. Perhaps, this can be used to efficiently search
for null-controls and r-waves in numerical simulations.

\subsection*{Comments}

\noindent$\bullet$\,\,\,What happens if the Convention \ref{A 1}
is broken, so that there is $\phi$ satisfying $(-\Delta+q)\phi=0$
and $\phi\underset{|x|\to\infty}\to 0$?

If $\int_{\Bbb R^3}|\phi(x)|^2\,dx=\infty$, i.e., $\phi$ is a {\it
semi-bound state} of the Hamiltonian $H$, then the system (\ref{Eq
1})--(\ref{Eq 3}) may be controllable as well as uncontrollable.

If $\int_{\Bbb R^3}|\phi(x)|^2\,dx<\infty$, then $\phi$ is a {\it
bound state} corresponding to the zero eigenvalue of $H$. Such a
state is unreachable (orthogonal to the wave subspace $\mathscr
U$), and the point $x=0$ is a first-order s-point. The situation,
obviously, does not change with any shift of the potential
$q\mapsto q_a$, and all points in the space $\Bbb R^3$ turn out to
be first-order s-points. Accordingly, the system (\ref{Eq
1})--(\ref{Eq 3}), being refocused on any point in space, is
uncontrollable. Perhaps this degenerate case also presents certain
physical interest.
\medskip

\noindent$\bullet$\,\,\, More about the s-points is written in
\cite{BV_3_JMA_2008,BV s-points JMA 2010}. In particular their
relations to break of Newton's factorization of the scattering
matrix and jet degeneration of the polynomially growing
q-polyharmonic functions are revealed, However, some important
questions remain open.
\smallskip

\noindent$\star$\,\,\,What is the dimension of the subspaces
$\mathscr N$ and  $\mathscr D$ for the s-points of the order
$m>1$?
\smallskip

\noindent$\star$\,\,\,First-order s-points (the set
$\Upsilon^1_q$) appear if and only if the Hamiltonian has a
discrete negative spectrum. Is this also true for s-points of
order $m>1$? Does the absence of bound states ensure
controllability and, consequently, the absence of s-points?
\smallskip

\noindent$\star$\,\,\,Does the finite order s-points exhaust the
set of all s-points, i.e., is the equality
$\Upsilon_q=\cup_{m\geqslant 1}\Upsilon_q^m$ valid?
\smallskip

\noindent$\star$\,\,\,Is it possible to extend our results to a
system described by the equation $\rho u_{tt}-\Delta_gu+qu=0$ with
a density $\rho>0$ and metric $g$, provided that $\rho\equiv 1$,
$g_{ij}\equiv\delta_{ij}$, and $q\equiv 0$ holds as $|x|>R_*$? The
difficulty is that not every smooth locally perturbed Euclidean
metric allows one to focus incoming waves on an arbitrarily given
point \footnote{S.V.Ivanov, private communication}. As we guess
and hope, the latter is possible for a metric $g$ that is
sufficiently close to the Euclidean one everywhere.
\smallskip

\subsection*{Example}

\noindent$\bullet$\,\,\,Let the potential be of the shape
\begin{equation}\label{Eq potential}
q(x)=\begin{cases}-\gamma^2, & 0\leqslant |x|\leqslant R_*;\\
                      \,\,\,\,\,0, &\qquad |x|> R_*;
\end{cases}
\end{equation}
with a constant $\gamma>0$. Looking for the solution to (\ref{Eq
properties Phi}) in the form $\Phi(x)=\frac{\psi(r)}{r}$, $r=|x|$,
we have
\begin{equation*}
\psi(r)=A\,\begin{cases}\sin\gamma r, & 0\leqslant r\leqslant R_*;\\
                      Br+C, &\qquad r> R_*;
\end{cases}
\end{equation*}
with the constants $A,B,C$. Matching the solution and its
derivative  at $r=R_*\pm 0$ and returning to $\Phi(x)$, we easily
get
\begin{equation}\label{Eq repres Phi(r)}
\Phi(x)=\Phi(r)=A\,\begin{cases}\frac{\sin\gamma r}{r}, & 0\leqslant r\leqslant R_*;\\
                      \gamma\cos\gamma R_*+\frac{\sin\gamma R_*-R_*\gamma \cos\gamma R_*}{r}, &\qquad r>
                      R_*.
\end{cases}
\end{equation}
Under Convention \ref{A 1}, we have to accept $\cos\gamma
R_*\not=0$ because otherwise, as is seen from (\ref{Eq repres
Phi(r)}), we get $(-\Delta+q)\Phi=0$ and $\Phi\to 0$ as
$|x|\to\infty$. Eventually, we arrive at
\begin{equation}\label{Eq repres Phi(r) FINAL}
\Phi(x)=\Phi(r)=\,\begin{cases}\frac{\sin\gamma r}{r\gamma\cos\gamma R_*}, & 0\leqslant r\leqslant R_*;\\
                      1-\frac{R_*}{r}\left(1-\frac{\tan\gamma R_*}{\gamma R_*}\right), &\qquad r>
                      R_*.
\end{cases}
\end{equation}
\noindent$\bullet$\,\,\,The zeros of the function $\Phi$ are the
first-order s-points. To find these zeros, let us fix $R_*>0$ and
increase $\gamma$ from zero. By (\ref{Eq repres Phi(r) FINAL}),
the zero $r_0=R_*\left(1-\frac{\tan\gamma R_*}{\gamma R_*}\right)$
does first appear, when the condition
$$
\frac{\pi}{2}<\gamma
R_*<\pi
$$
is satisfied.

\begin{figure}[h]
    \includegraphics{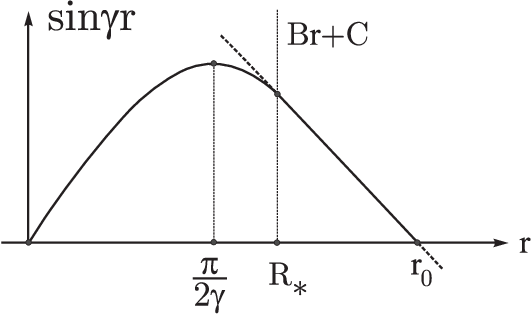}
    \caption{zeros of $\Phi$}
\end{figure}

\noindent Also, a simple analysis shows that the same condition
provides appearance of a single eigenvalue $-\varkappa_0^2$ of the
Hamiltonian $H$. By the spherical symmetry of the potential
(\ref{Eq potential}), system (\ref{Eq 1})--(\ref{Eq 3}) possesses
s-points which constitute the sphere $S_{r_0}(0)$ in the space
$\Bbb R^3$.

Then, with increasing $\gamma$, under condition $
\frac{3\pi}{2}<\gamma R_*<2\pi $, a second zero $r_1$ of
$\Phi(r)$, a second eigenvalue $-\varkappa^2_1$ of $H$ and,
accordingly, the sphere $S_{r_1}(0)$ consisting of the s-points,
appear, and so on.
\medskip

\noindent$\bullet$\,\,\,Quite analogously, looking for the
solution of (\ref{Eq properties Phi}) in the form
$\Phi(x)=\psi(|x|)Y^l_m\left(\frac{x}{|x|}\right)$, one can get
more series of the s-points.
\medskip

\noindent$\bullet$\,\,\, Of particular interest is the modeling of
null-controls and the corresponding r-waves, but it has not yet
been possible to find any explicit representations for them
suitable for numerical implementation. The question comes down to
whether equations of the type (\ref{Eq for Phi}) can be solved
efficiently.

\subsection*{Appendix}

Here we justify the derivation (\ref{Eq derivation}) and provide
the rigorous proof of representation (\ref{Eq G=Phi by x+omega}).
\smallskip

\noindent$\bullet$\,\,\,

\begin{Lemma}\label{L Upsilon not empty}
Any q-polyharmonic state $h$ is unreachable (belons to the
subspace $\mathscr D=\mathscr H\ominus\mathscr U$).
\end{Lemma}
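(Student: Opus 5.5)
The plan is to make rigorous the sketch (\ref{Eq derivation}) in two stages. First we show that $h$ is orthogonal to each delayed wave subspace $\mathscr U^\xi$, $\xi>0$. Then we pass from the delayed subspaces to all of $\mathscr U$.

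For the first stage, fix $\xi>0$ and take $f\in\mathscr F^\xi$ smooth in $\tau$ and compactly supported in $\tau$. Then $u^f(\cdot,0)$ vanishes in $B_\xi(0)$ and, by finite speed of propagation, also for large $|x|$. By stationarity of the system, $\partial_t^{2m}u^f$ is again a wave, namely $u^{g}$ with $g=\partial_\tau^{2m}f$ up to a sign fixed by the convention in (\ref{Eq 3}); $g$ still belongs to $\mathscr F^\xi$. From (\ref{Eq 1}) we get $u^g(\cdot,0)=(-H)^m u^f(\cdot,0)$.

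Now $u^f(\cdot,0)$ is compactly supported in the shell $\xi\leqslant|x|\leqslant \rho$ and is smooth enough there, since $q$ is only bounded and we need $W^{2m}_2$-type regularity. We would obtain this by taking $f$ smooth enough in $\tau$ and using $H^k u^f=u^{\pm\partial^{2k}f}\in\mathscr H$, so that $u^f(\cdot,0)\in{\rm Dom}\,H^m$. Meanwhile $h$ satisfies $H^mh=0$ in $\Bbb R^3\setminus\{0\}$ in the distributional sense. Pairing against a function whose support avoids the origin, we integrate by parts $m$ times and obtain
$$
\langle h,u^g(\cdot,0)\rangle_\mathscr H=(-1)^m\langle H^mh,u^f(\cdot,0)\rangle=0 .
$$
The boundary terms vanish because $u^f(\cdot,0)$ vanishes near $0$ and near infinity. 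Strictly, one justifies this via a cutoff $\chi$ equal to $1$ on the support of $u^f(\cdot,0)$ and supported away from $0$. Local elliptic regularity of $h$ on the support of $\chi$ ($q$ is bounded) then makes $\chi h\in W^{2m}_2$ locally.

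The key step, and the main obstacle, is density. We must show that the set $\{\partial_\tau^{2m}f\}$, with $f$ as above, is dense in $\mathscr F^\xi$, or at least that its image under $W$ is dense in $\mathscr U^\xi$. Since $W$ is bounded, density in $\mathscr F^\xi=L_2([\xi,\infty)\times\Bbb S^2)$ suffices. A function $g\in C_0^\infty((\xi,\infty)\times\Bbb S^2)$ is a $2m$-th $\tau$-derivative of a compactly supported function in $(\xi,\infty)$ if and only if its moments $\int\tau^k g(\tau,\omega)\,d\tau$ vanish for $k<2m$ and every $\omega$. Such $g$ are dense in $L_2$ on the half-line: the moment constraints can be corrected by subtracting bumps placed far out with small $L_2$ norm, obtained by scaling the support to long intervals. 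Hence $h\perp\mathscr U^\xi$ for every $\xi>0$.

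Finally, $\bigcup_{\xi>0}\mathscr F^\xi$ is dense in $\mathscr F$, since $\|f-f\chi_{\tau\geqslant\xi}\|_\mathscr F\to0$ as $\xi\to0$. By continuity of $W$, $\bigcup_\xi\mathscr U^\xi$ is then dense in $\mathscr U$. Therefore $h\perp\mathscr U$, i.e. $h\in\mathscr D$.
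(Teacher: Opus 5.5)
Your overall scheme matches the paper's: stationarity $u^{\partial_t^{2m}f}=(-H)^m u^f$, moving $H^m$ onto $h$, and density of the $2m$-th derivatives. Your self-contained density argument, via vanishing moments and the delayed subspaces $\mathscr F^\xi$, is correct. The gap is the claim that for $f$ smooth and compactly supported in $\tau$ the state $u^f(\cdot,0)$ vanishes for large $|x|$ ``by finite speed of propagation''. In $\mathbb R^3$ this is false, already for $q=0$. Take $\phi\in C^\infty(\mathbb R)$ with $\phi=0$ on $(-\infty,\xi]$, $\phi'\in C_0^\infty((\xi,T))$ and $c:=\int\phi'\,d\tau\neq0$, and set $u(x,t)=\partial_{x_1}\bigl[(\phi(|x|+t)-\phi(t-|x|))/|x|\bigr]$. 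This is the solution with control $f(\tau,\omega)=\omega_1\phi'(\tau)$, which is smooth and supported in $[\xi,T]$. Yet $u(x,0)=-c\,x_1/|x|^3$ for $|x|>T$, and likewise $u(x,-s)=-c\,x_1/|x|^3$ for $|x|>s+T$. Such multipole tails are invisible in the limit (\ref{Eq 3}) and are present at every earlier time, so no domain-of-dependence argument removes them. Consequently your cutoff $\chi$, equal to $1$ on $\operatorname{supp}u^f(\cdot,0)$, cannot be compactly supported, and the vanishing of the boundary terms at infinity is unjustified. This is exactly why the paper works with the special dense class $\mathcal L$ from \cite{BV_2_contr_R3_2006,BV_3_JMA_2008}, whose waves are finite, and imports the density of $\partial_t^p\mathcal L$.

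The gap can be closed within your framework. Set $u:=u^f(\cdot,0)\in\mathrm{Dom}\,H^m$. Since $q=0$ for $|x|>R_*$, you have $\Delta^k u\in L_2(|x|>R_*)$ for $k\leqslant m$, hence $u\in W^{2m}_2(|x|>R_*+1)$. Let $\chi_R=\chi(\cdot/R)$ be a standard cutoff with $R>R_*+1$. Then $\chi_R u\in\mathrm{Dom}\,H^m$ has compact support in $\mathbb R^3\setminus\{0\}$. Moreover $H^m(\chi_R u)-\chi_R H^m u$ consists of derivatives of $\chi_R$ (each $O(R^{-1})$) times $\partial^\beta u$, $|\beta|\leqslant 2m-1$, on $R<|x|<2R$, so it tends to $0$ in $\mathscr H$. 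Hence $\langle h,H^mu\rangle=\lim_R\langle h,H^m(\chi_Ru)\rangle=0$ by your compact-support argument.

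A second, smaller point concerns regularity. For merely bounded $q$ (e.g. the step potential of the Example), $h$ need not lie in $W^{2m}_{2,\mathrm{loc}}$ when $m\geqslant2$, so your claim $\chi h\in W^{2m}_2$ fails. Instead move $H$ across one factor at a time: $\langle H^kh,H^{m-k}v\rangle=\langle H^{k+1}h,H^{m-k-1}v\rangle$. This only needs $H^kh\in W^2_{2,\mathrm{loc}}(\mathbb R^3\setminus\{0\})$ and $H^jv\in W^2_2$ with compact support in $\mathbb R^3\setminus\{0\}$. With these two corrections your argument is complete. Unlike the paper's, it does not rely on the cited density result for $\mathcal L$.
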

\begin{proof}
As is shown in \cite{BV_2_contr_R3_2006,BV_3_JMA_2008}, space
$\mathscr F$ contains a dense set $\mathcal L\subset
C^\infty_0(\mathcal T)$ of controls, which are supported in
$(0,\infty)\times\Bbb S^2$, and initiate the waves $u^f(\cdot,0)$,
which are finite, supported in $\Bbb R^3\setminus \{0\}$ (vanish
in a neighborhood of $x=0$) and constitute a dense set in
$\mathscr U$. Also, for any fixed $p=1,2,\dots$, the set of
derivatives $\mathcal L_p:=\{\partial_t^pf\,|\,\,f\in\mathcal L\}$
is dense in $\mathscr F$. Respectively, the waves
$u^{\partial_t^pf}(\cdot,0)$ are finite, supported in $\Bbb
R^3\setminus \{0\}$ and constitute a dense set in $\mathscr U$. As
a result, in subsequent calculations, the convergence of the
integrals and the disappearance of the boundary terms during
integration by parts are ensured.

Let $h$ be a q-polyharmonic state of the order $m$, so that
$(-\Delta+q)^mh=0$ in $\Bbb R^3\setminus\{0\}$ holds. According to
the stationarity of the system (\ref{Eq 1})--(\ref{Eq 3}), for
$f\in\mathcal L$ we have
\begin{align*}
& \langle h,u^{\partial_t^{2m}f}(\cdot,0)\rangle=\int_{\Bbb
R^3}h(x)\,u^{\partial_t^{2m}f}(x,0)\,dx=\int_{\Bbb
R^3}h(x)\,\partial_t^{2m}u^{f}(x,0)\,dx\overset{(\ref{Eq
1})}=\\
&=(-1)^m\int_{\Bbb R^3}h(x)\,\left[(-\Delta+q)^m
u^{f}(x,0)\right]\,dx=\\
&=(-1)^m\int_{\Bbb R^3}[(-\Delta+q)^mh(x)]\, u^{f}(x,0)\,dx\,=\,0.
\end{align*}
In view of the density of the waves
$u^{\partial_t^{2m}f}(\cdot,0)$ in the reachable set $\mathscr U$
we conclude that the state $h$ is orthogonal to $\mathscr U$,
i.e., is unreachable: $h\in\mathscr D$ holds.
\end{proof}

\noindent$\bullet$\,\,\, The resolvent of the Hamiltonian $H$
satisfies the Hilbert equation
\begin{equation}\label{Eq Hilbert Res}
R_\lambda^H=R_\lambda^{H_0}-R_\lambda^{H_0}\,\hat q\,R_\lambda^H,
\end{equation}
where $\hat q=H-H_0$ multiplies by the potential. Condition \ref{A
1} ensures the existence of the resolvent $R_\lambda^H$ for
$\lambda=0$, which is an integral self-adjoint operator of the
form $R^Hu(x)=\int_{\Bbb R^3}G(x,y)u(y)\,dy$. By (\ref{Eq Hilbert
Res}), its kernel satisfies
\begin{align}\label{Eq Hilb resolvents}
\notag & G(x,y)=G_0(x,y)-\int_{\Bbb R^3}G_0(x,s)q(s)G(s,y)\,ds;
\quad G_0(x,y)=\frac{1}{4\pi|x-y|};\\
& G(x,y)\,=\,G(y,x),\qquad x,y\in\Bbb R^3,\,x\not= y,
\end{align}
where $G_0$ and $G$ are the unperturbed and perturbed Green
functions:
\begin{equation*}
-\Delta G_0(\cdot,y)\,=\,(-\Delta+q)G(\cdot,y)\,=\,\delta_y(\cdot)
\end{equation*}
holds for any fixed $y\in\Bbb R^3$. Note that in fact the integral
in (\ref{Eq Hilb resolvents}) is taken over the compact $K$.

\begin{Lemma}\label{L G=Phi by x}
For any fixed $y\in\Bbb R^3$, the representation
\begin{equation}\label{Eq G=Phi by x+...}
G(x,y)=\frac{\Phi(y)}{4\pi|x|} +\omega(x,y),
\end{equation}
holds, where $\Phi$ is a smooth function obeying
\begin{equation}\label{Eq for Phi+}
\Phi(x)+\int_K\frac{q(s)}{4\pi|x-s|}\Phi(s)\,ds\,=\,1, \qquad
x\in\Bbb R^3,
\end{equation}
obeying $\omega(x,y)=O\left(\frac{1}{|x|^2}\right)$ as
$|x|\to\infty$ and  $\omega(\cdot,y)\in L_2(\Bbb R^3)$.
\end{Lemma}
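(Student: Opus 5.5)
We propose to derive the representation directly from the Hilbert resolvent identity (\ref{Eq Hilb resolvents}), using the symmetry $G(x,y)=G(y,x)$ to transfer the asymptotics in $x$ to a statement about an integral equation in $y$. Fix $y$ and write, by symmetry and (\ref{Eq Hilb resolvents}) with the roles of the arguments exchanged,
$$
G(x,y)=G(y,x)=\frac{1}{4\pi|y-x|}-\int_K\frac{q(s)}{4\pi|y-s|}\,G(s,x)\,ds .
$$
For $|x|\to\infty$ and $s\in K$ (compact, containing ${\rm supp}\,q$) we have $\frac{1}{4\pi|x-s|}=\frac{1}{4\pi|x|}+O(|x|^{-2})$ uniformly in $s\in K$. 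The idea is to show that $|x|\,4\pi\,G(s,x)\to\Phi(s)$ uniformly on $K$ (together with the point $s=y$), with the error $O(|x|^{-1})$, where $\Phi$ solves (\ref{Eq for Phi+}).

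Step 1: solvability of (\ref{Eq for Phi+}). Consider the operator $T\Phi(x):=\int_K\frac{q(s)}{4\pi|x-s|}\Phi(s)\,ds$ on $C(K)$; it is compact (weakly singular kernel, bounded $q$). By Fredholm's alternative, $I+T$ is invertible on $C(K)$ iff $\Phi+T\Phi=0$ has only the trivial solution. If $\Phi+T\Phi=0$ on $K$, extend $\Phi$ to $\Bbb R^3$ by the same formula $\Phi=-T\Phi$; then $(-\Delta+q)\Phi=0$ in $\Bbb R^3$ and $\Phi=O(1/|x|)\to0$, so Convention \ref{A 1} forces $\Phi=0$. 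Hence (\ref{Eq for Phi+}) has a unique continuous solution on $K$, which then extends to $\Bbb R^3$ via $\Phi=1-T\Phi$; standard potential estimates ($q\Phi$ bounded, so the Newton potential is $C^{1,\alpha}$) give the smoothness claimed, and $\Phi(x)-1=O(1/|x|)$.

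Step 2: the asymptotics. Put $v_x(s):=4\pi|x|\,G(s,x)$ for $s\in K$. From (\ref{Eq Hilb resolvents}) with first argument $s\in K$ and second $x$,
$$
v_x(s)+\int_K\frac{q(s')}{4\pi|s-s'|}\,v_x(s')\,ds'=\frac{|x|}{|s-x|}=1+\epsilon_x(s),\qquad \sup_K|\epsilon_x|=O(|x|^{-1}).
$$
Thus $(I+T)v_x=1+\epsilon_x$ on $K$, and by the bounded invertibility from Step 1, $v_x=\Phi+(I+T)^{-1}\epsilon_x$, i.e. $\sup_K|v_x-\Phi|=O(|x|^{-1})$. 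One technical point I expect to be the main obstacle: one must know a priori that $G(\cdot,x)$ restricted to $K$ is continuous (bounded) for large $|x|$ so that the equation is read in $C(K)$; this follows since for $|x|$ large $x\notin K$ and $G(\cdot,x)=(I+T)^{-1}$ applied to $G_0(\cdot,x)|_K$, i.e. one actually constructs $G$ on $K$ this way and identifies it with the resolvent kernel by uniqueness (again Convention \ref{A 1}). Plugging back into the first display with $s$ the dummy variable:
$$
4\pi|x|\,G(y,x)=\frac{|x|}{|y-x|}-\int_K\frac{q(s)}{4\pi|y-s|}v_x(s)\,ds=1-T\Phi(y)+O(|x|^{-1})=\Phi(y)+O(|x|^{-1}),
$$
the $O$ being locally uniform in $y$ (including $y\in K$, where $|y-s|^{-1}$ is integrable). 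Setting $\omega(x,y):=G(x,y)-\frac{\Phi(y)}{4\pi|x|}$ gives $\omega=O(|x|^{-2})$.

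Step 3: $\omega(\cdot,y)\in L_2$. Near $x=y$, $G$ has the singularity $\frac{1}{4\pi|x-y|}$, square integrable in 3D; near $x=0$ the term $\frac{\Phi(y)}{4\pi|x|}$ is likewise square integrable; on bounded regions away from these points $\omega$ is bounded (by the integral equation, $G(\cdot,y)$ is continuous off $y$). At infinity $|\omega|^2=O(|x|^{-4})$ is integrable in $\Bbb R^3$. Hence $\omega(\cdot,y)\in L_2(\Bbb R^3)$, completing the proof.
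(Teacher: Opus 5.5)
Your proof is correct and is essentially the paper's argument: since $v_x-\Phi=4\pi|x|\,\omega(x,\cdot)$ and $\epsilon_x=4\pi|x|\,\theta(\cdot,x)$, your equation $(I+T)(v_x-\Phi)=\epsilon_x$ on $K$ is exactly the paper's integral equation for $\omega$. As in the paper, it is inverted by Fredholm theory with injectivity from Convention~\ref{A 1} and combined with the symmetry $G(x,y)=G(y,x)$. The only difference is the functional setting. You invert $I+T$ on $C(K)$ and extend by the integral formula, instead of passing to $\nu$ and the weighted space $L_{2,\,\rho}(\mathbb{R}^3)$. This is more elementary, gives the locally uniform pointwise bound directly, and also supplies the solvability of (\ref{Eq for Phi+}), which the paper takes for granted.
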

\begin{proof}
{\bf 1.\,\,\,}Using the symmetry of the Green function, we write
the equation (\ref{Eq Hilb resolvents}) in the form
\begin{equation*}
G(y,x)=\frac{1}{4\pi|x-y|}-\int_K\frac{q(s)}{4\pi|x-s|}\,G(y,s)\,ds,\qquad
x,y\in\Bbb R^3,\,\,x\not=y.
\end{equation*}
Looking for the solution of (\ref{Eq Hilb resolvents}) in the form
$G(x,y)=\frac{\Phi(y)}{4\pi|x|}+\omega(x,y)$ with $\Phi$ obeying
(\ref{Eq for Phi+}), we have
\begin{align*}
\frac{\Phi(x)}{4\pi|y|}+\omega(y,x)=\frac{1}{4\pi|x-y|}-\int_K\frac{q(s)}{4\pi|x-s|}\left[\frac{\Phi(s)}{4\pi|y|}+\omega(y,s)\right]\,ds.
\end{align*}
Representing
$$
\frac{1}{4\pi|x-y|}=\frac{1}{4\pi|y|}+\theta(x,y),
$$
we easily get
\begin{align*}
& \omega(y,x)+\int_K\frac{q(s)}{4\pi|x-s|}\,\omega(y,s)\,ds=\theta(x,y)+\\
&
-\frac{1}{4\pi|y|}\left[\Phi(x)+\int_K\frac{q(s)}{4\pi|x-s|}\,{\Phi(s)}\,ds-1\right]\overset{(\ref{Eq
for Phi})}=\theta(x,y),
\end{align*}
so that $\omega$ is a unique (by Convention \ref{A 1}) solution to
the equation
$$
\omega(y,x)+\int_K\frac{q(s)}{4\pi|x-s|}\,\omega(y,s)\,ds=\frac{1}{4\pi|x-y|}-\frac{1}{4\pi|y|},\qquad
x,y\in\Bbb R^3,\,\,x\not= y,
$$
where the r.h.s. is  $O(|y|^{-2})$ as $|y|\to\infty$.
\smallskip

{\bf 2.\,\,\,}Putting
$$
\nu(y,x):=\omega(y,x)-\frac{1}{4\pi|x-y|}+\frac{1}{4\pi|y|},
$$
we get
\begin{align}
\notag &
\nu(y,x)+\int_K\frac{q(s)}{4\pi|x-s|}\,\nu(y,s)\,ds=\frac{1}{16\pi^2}\int_K\frac{q(s)}{|x-s|}\left[\frac{1}{|s-y|}-\frac{1}{|y|}\right]\,ds=:\\
\label{Eq nu} & =:h(y,x),\qquad x\in\Bbb R^3,
\end{align}
which is an equation in the space $L_{2,\,\rho}(\Bbb R^3)$ with
the weight $\rho(x)=\frac{1}{1+|x|}$, the integral operator in the
l.h.s. being compact and injective by Condition \ref{A 1}. Hence,
by the Fredholm theory \cite{Lockhart_McOwen,McOwen}, equation
(\ref{Eq nu}) has a unique solution $\nu(y,\cdot)\in
L_{2,\,\rho}(\Bbb R^3)$ depending on $y$ as a parameter. Since the
r.h.s. obeys $\|h(y,\cdot)\|_{L_{2,\,\rho}(\Bbb
R^3)}\leqslant\frac{c_q}{|y|^2}$, we have
$\|\nu(y,\cdot)\|_{L_{2,\,\rho}(\Bbb
R^3)}=O\left(\frac{1}{|y|^2}\right)$.

Since the integral operator maps $L_{2,\,\rho}(\Bbb R^3)$ to
$C_{\rm loc}(\Bbb R^3)$ and $h(y,\cdot)$ is a continuous function
of $x$, the point-wise estimate $|\nu(y,x)|= O(\frac{1}{|y|^2})$
holds. Returning to $\omega$, we get $\omega(y,x)=
O(\frac{1}{|y|^2})$ uniformly w.r.t. $x$ in any ball
$B_R(0)=\{x\in\Bbb R^3\,|\,\,|x|<R\}$, $R>0$.

The latter implies  $\omega(x,y)=O(\frac{1}{|x|^2})$ uniformly
w.r.t. $y\in B_R(0)$, $R>0$, so that $\omega(\cdot,y)\in L_2(\Bbb
R^3)$ is valid and we arrive at (\ref{Eq G=Phi by x+...}).
\end{proof}
\begin{Corollary}\label{C 1}
If $\Phi(a)=0$ then the Green function is a state of finite energy
(belongs to the space $\mathscr H$).
\end{Corollary}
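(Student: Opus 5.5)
The corollary concerns the function $G(\cdot,a)$, which I read as the Green function with its pole at $a$. The plan is to split the finite-energy question into the behaviour near the pole and the behaviour at infinity, and to use Lemma \ref{L G=Phi by x} with $y=a$ for the latter.

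Substituting $y=a$ and $\Phi(a)=0$ into (\ref{Eq G=Phi by x+...}) gives
$$
G(x,a)=\omega(x,a),\qquad x\not=a,
$$
and Lemma \ref{L G=Phi by x} already gives $\omega(\cdot,a)\in L_2(\Bbb R^3)$. To make the argument transparent, I would split $\Bbb R^3$ into a ball $B_\rho(a)$ around the singularity and its exterior, and check square integrability on each piece separately.

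Near $x=a$, I would use the local asymptotics (\ref{Eq def G}): $G(x,a)\sim\frac{1}{4\pi|x-a|}$. More precisely, (\ref{Eq Hilb resolvents}) writes $G(\cdot,a)-G_0(\cdot,a)$ as a Newtonian potential of the bounded, compactly supported density $q\,G(\cdot,a)$. That density is locally integrable, since it is dominated by $|x-a|^{-1}$ near $a$ and by the decay of $G$ elsewhere. Hence $G(\cdot,a)-G_0(\cdot,a)$ is locally bounded. Since $\int_{B_\rho(a)}|x-a|^{-2}\,dx=4\pi\rho<\infty$, the function $G(\cdot,a)$ is square integrable on $B_\rho(a)$.

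On a bounded annular region between $B_\rho(a)$ and a large ball $B_R(0)$, $G(\cdot,a)$ is continuous and therefore bounded. For $|x|>R$, I would use the estimate $\omega(x,a)=O(|x|^{-2})$, which is uniform for $a$ in a ball. Since $\int_{|x|>R}|x|^{-4}\,dx<\infty$, $G(\cdot,a)$ is square integrable there as well. This is exactly the point where $\Phi(a)=0$ matters: without it, the term $\frac{\Phi(a)}{4\pi|x|}$ decays only like $|x|^{-1}$, which is not in $L_2$ at infinity.

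The main obstacle is the local argument near the pole, namely justifying that the integral term in (\ref{Eq Hilb resolvents}) is bounded near $a$. That step relies on the a priori local integrability of $G(\cdot,a)$ on $K$, which I would take from the existence of the resolvent $R^H$ at $\lambda=0$ guaranteed by Convention \ref{A 1}. Combining the three regions gives $G(\cdot,a)\in\mathscr H$.
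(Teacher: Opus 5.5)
Your argument is correct and essentially matches the paper's: $G(\cdot,a)$ is square integrable near the pole because of the $|x-a|^{-1}$ singularity given by (\ref{Eq def G}), and at infinity because $\Phi(a)=0$ leaves $G(\cdot,a)=\omega(\cdot,a)=O(|x|^{-2})$ by Lemma \ref{L G=Phi by x}. The Newtonian-potential detour near $a$ is unnecessary, and it is loosely stated: $qG(\cdot,a)$ is not bounded when $a$ lies in the support of $q$, and mere local integrability of a density does not bound its potential. The bound $|G(x,a)|\leqslant c\,|x-a|^{-1}$ that you invoke already gives local square integrability directly.
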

Indeed, $G(x,a)\sim \frac{1}{|x-a|}$ for $x$ close to $a$, and the
proved above asymptotic $G(x,a)\sim \frac{1}{|x|^2}$ as
$|x|\to\infty$ follow to $\int_{\Bbb R^3}|G(x,a)|^2dx<\infty$.

%

\bigskip

\bigskip

\bigskip




\begin{thebibliography}{9}

\bibitem{BV_2_contr_R3_2006}
M.I.Belishev, A.F.Vakulenko.
\newblock {On a control problem for the wave equation in ${\Bbb R}^3$}.
\newblock {\em Journal of Mathematical Sciences}, 05/2007; 142(6):2528-2539.
DOI:10.1007/s10958-007-0140-3.



\bibitem{BV_3_JMA_2008}
M.I.Belishev, A.F.Vakulenko.
\newblock {Reachable and unreachable sets in the scattering problem for acoustical equation in $\Bbb R^3$.}
\newblock {\em SIAM J. Math. Analysis}, 39 (2008), no 6, 1821--1850.


\bibitem{BV s-points JMA 2010}
M.I.Belishev, A.F.Vakulenko.
\newblock {$s$-points in three-dimensional acoustical scattering.}
\newblock {\em SIAM J. Math. Analysis}, 42 (2010), no 6, 2703--2720.


\bibitem{BV_11_contr_R3_2024}
M.I.Belishev, A.F.Vakulenko.
\newblock {On controllability of acoustical scattering system in ${\Bbb R}^3$}.
\newblock {\em Journal of Mathematical Sciences}, Vol. 296, No. 1, January,
2026. DOI 10.1007/s10958-026-08232-6.



\bibitem{Fink 1}
M.Fink.
\newblock{Time reversal of
ultrasonic fields. I. Basic principles.}
\newblock {\em IEEE Trans Ultrason Ferroelectr Freq Control}, 1992;39(5):555-66.
DOI: 10.1109/58.156174.


\bibitem{Fink 2}
Francois Wu, Jean-LouisThomas, and Mathias Fink.
\newblock{Time
Reversal of Ultrasonic Fields-Part 11: Experimental Results.}
\newblock {\em IEEE Transactions on Ultrasonics, Ferroelectrics and Frequency Control},
vol. 39, no. 5, September 1992, 567--578.


\bibitem{Lockhart_McOwen}
R.B.Lockhart, R.C.McOwen.
\newblock{Elliptic differential
operators on noncompact manifolds}.
\newblock {\em Annali della Scuola Normale
Superiore di Pisa, Classe di Scienze}, tome 12, no 3 (1985), p.
409-447.


\bibitem{McOwen}
R.C.McOwen.
\newblock{Fredholm theory of partial differential equations on complete Riemannian manifolds}.
\newblock {\em Pacific Journal of Mathematics}, Vol. 87, No. 1,
1980, 169--185.


\bibitem{R.G.Novikov}
R.G.Novikov.
\newblock{Formulas
for phase recovering from phaseless scattering data at fixed
frequency}.
\newblock {\em Bulletin des Sciences Mathematiques}, Volume 139, Issue 8, December
2015, Pages 923-936.


\bibitem{TRM_1991}
C.Prada, F.Wu and M.Fink.
\newblock{The iterative time reversal mirror: A solution to self-focusing in
the pulse echo mode.}
\newblock{\em J.Acousr.Soc.Am.}. 90 (2), 1119-1129(1991).


\bibitem{Reed_Simon}
M.Reed, B.Simon.
\newblock{Methods of modern mathematical physics, v.1.}
\newblock{\em Academic Press, New York, London}, 1972.

\end{thebibliography}
\end{document}